\newcommand{\blind}{0}
\newtheorem{proposition}{Proposition}
\begin{document}

\def\spacingset#1{\renewcommand{\baselinestretch}%
{#1}\small\normalsize} \spacingset{1}


\if0\blind
{
  \title{\bf Particle Gibbs for Likelihood-Free Inference of State Space Models with Application to Stochastic Volatility}
  \author{Zhaoran Hou and Samuel W.K. Wong\thanks{Author for correspondence: samuel.wong@uwaterloo.ca}\hspace{.2cm}\\
    Department of Statistics and Actuarial Science, University of Waterloo}
  \maketitle
} \fi

\if1\blind
{
  \bigskip
  \bigskip
  \bigskip
  \begin{center}
    {\LARGE\bf Title}
\end{center}
  \medskip
} \fi

\bigskip
\begin{abstract}
State space models (SSMs) are widely used to describe dynamic systems. However, when the likelihood of the observations is intractable, parameter inference for SSMs cannot be easily carried out using standard Markov chain Monte Carlo or sequential Monte Carlo methods. In this paper, we propose a particle Gibbs sampler as a general strategy to handle SSMs with intractable likelihoods in the approximate Bayesian computation (ABC) setting. The proposed sampler incorporates a conditional auxiliary particle filter, which can help mitigate the weight degeneracy often encountered in ABC. To illustrate the methodology, we focus on a classic stochastic volatility model (SVM) used in finance and econometrics for analyzing and interpreting volatility. Simulation studies demonstrate the accuracy of our sampler for SVM parameter inference, compared to existing particle Gibbs samplers based on the conditional bootstrap filter. As a real data application, we apply the proposed sampler for fitting an SVM to S\&P 500 Index time-series data during the 2008 financial crisis.
\end{abstract}

\noindent
{\it Keywords:} approximate Bayesian computation, financial time series, particle MCMC, sequential Monte Carlo
\vfill

\newpage
\spacingset{1} 
\section{Introduction}
\label{sec:intro}

State space models (SSMs) describe dynamic systems via a time series of unobserved variables and observations generated conditional on those variables \citep{kitagawa1998self}. The unobserved variables are commonly known as \textit{hidden states}, and are often taken to be a discrete-time Markov process; then the SSM is specified via the transition probability of the hidden states and the likelihood of the observations. When the model also involves unknown parameters, Markov chain Monte Carlo \citep[MCMC,][]{jacquier1994bayesian} and sequential Monte Carlo \citep[SMC,][]{liu2001combined,storvik2002particle,carvalho2010particle} are general computational methods for inferring the parameters and hidden states together, by drawing samples from their joint posterior distribution in a Bayesian setting. However, the inference problem is more difficult when the likelihoods are intractable: examples are stochastic kinetic models in systems biology \citep{owen2015scalable,lowe2023accelerating} and stochastic volatility models with an intractable noise term \citep{vankov2019filtering}. In this likelihood-free setting, standard inference methods are no longer applicable, and this paper develops a new sampler which we illustrate in the context of stochastic volatility modeling.

In the analysis of financial time series, volatility is commonly used to quantify uncertainty or risk. 
Given a time series of observed prices $\{P_t\}_{t=0}^T$, for $t=1, \ldots, T$ the \textit{return} is defined as $r_t \equiv \log(P_t) - \log(P_{t-1})$ and the \textit{volatility} is defined as $h_t \equiv Var(r_t | r_{1:t-1})$. A classic SVM \citep{jacquier1994bayesian,jacquier2004bayesian,taylor2008modelling}, which has been widely used in option pricing and portfolio management, adopts the form of an SSM: for $t=1,\dots, T$,
\begin{align}
    \log(h_t) = \tau + \phi \log(h_{t-1}) + \sigma_h \epsilon_t, \quad
    r_t = \sqrt{h_t} \times Z_t
    \label{SVM(1)}
\end{align}
where $\theta \equiv (\tau, \phi, \sigma_h^2)$ denotes the parameters, $Z_t$ and $\epsilon_t$ are independent noise terms, and an initial distribution is assigned for $h_0$. 
In the log-volatility process, $\tau$ contributes to the mean, $\phi$ is an auto-regressive parameter that governs the persistence (with $|\phi| < 1$ required for second-order stationarity) \citep{jacquier2004bayesian}, and $\sigma_h^2$ is a noise level that governs the amount of autocorrelation. Model \eqref{SVM(1)} has a hierarchical structure: we let $l_t(r_t \mid h_t)$ denote the likelihood density, and $g_t(h_t \mid h_{t-1}, \theta)$ denote the transition density.

In practice, $Z_t$ and $\epsilon_t$ in Model \eqref{SVM(1)} are often assumed to follow standard Gaussian distributions. Then we can integrate out $h_{0:T}$ analytically to directly obtain the likelihood $p(r_{1:T} \mid \theta)$, and the SVM parameters can be estimated by standard methods, e.g., maximum likelihood in a frequentist approach \citep{fridman1998maximum}, and  MCMC in a Bayesian approach \citep{jacquier1994bayesian}. However, empirical studies suggest that the Gaussian assumption for $Z_t$ may not adequately capture the heavy tails and skewness of financial time series \citep{engle2001good}; as a more flexible alternative, the so-called \textit{stable distribution} \citep{mandelbrot1963variation,nolan1997numerical} can be adopted for $Z_t$ instead  \citep{lombardi2009indirect,vankov2019filtering}. The stable distribution does not have a closed-form density function, hence employing it for $Z_t$ leads to an analytically intractable likelihood $l_t(r_t \mid h_t)$ that is expensive to evaluate; this is the setting of interest in this paper.

Particle Markov chain Monte Carlo \citep[PMCMC,][]{andrieu2010particle} is a general sampling approach that can be adapted to handle Model \eqref{SVM(1)} in a Bayesian setting with intractable likelihoods. Initially proposed for inference of SSMs, PMCMC provides a class of algorithms that combine the features of MCMC and SMC; in this context, SMC methods (also known as particle filters) are well-suited for drawing posterior samples of the hidden states in an SSM (e.g., $h_{0:T}$ in Model \eqref{SVM(1)}). To our best knowledge, PMCMC is the main type of approach for SVM parameter inference when it is not possible to integrate out $h_{0:T}$ \citep{vankov2019filtering}. Letting $p(\theta)$ denote the joint prior for $\theta$, PMCMC then targets the joint posterior of $\theta$ and $h_{0:T}$, namely
$$
p(\theta, h_{0:T} \mid r_{1:T}) \propto p(\theta)g_0(h_0 \mid \theta) \prod_{t=1}^T g_t(h_t \mid h_{t-1}, \theta)l_t(r_t \mid h_t),
$$
if $l_t(r_t \mid h_t)$ can be evaluated. (The corresponding version of PMCMC that bypasses this likelihood evaluation is discussed in Section \ref{sec:methodology}.)
When closed-form conditional distributions of the parameters are available, a special case of PMCMC, known as \textit{particle Gibbs}, can be implemented. As applied here, the basic strategy of particle Gibbs alternates between sampling from $p(\theta \mid h_{0:T}, r_{1:T})$ and $p(h_{0:T} \mid \theta, r_{1:T})$ at each iteration. With the help of conjugate priors, sampling from $p(\theta \mid h_{0:T}, r_{1:T})$ can be straightforward. Sampling from $p(h_{0:T} \mid \theta, r_{1:T})$ can potentially be handled by SMC as for an SSM; e.g., one might apply SMC and obtain a particle approximation of $p(h_{0:T} \mid \theta, r_{1:T})$, denoted by $\widehat{p}_{SMC}(h_{0:T} \mid \theta, r_{1:T})$. However, it is not valid to simply substitute sampling from $p(h_{0:T} \mid \theta, r_{1:T})$ with sampling from $\widehat{p}_{SMC}(h_{0:T} \mid \theta, r_{1:T})$ in particle Gibbs, because doing so does not admit the target distribution as invariant \citep{andrieu2010particle}. To correctly embed SMC within a particle Gibbs algorithm, the form of SMC known as \textit{conditional SMC} (cSMC) should be implemented instead; we review cSMC algorithms in Section \ref{sec:csmc}.

Approximate Bayesian computation (ABC) is a general technique that can be used to bypass the evaluation of an intractable or expensive likelihood, if one can directly simulate observations from the likelihood \citep{marin2012approximate}. Thus, to perform inference on Model \eqref{SVM(1)} with an analytically intractable $l_t(r_t \mid h_t)$, ABC may be combined with cSMC. The basic idea is to introduce a sequence of auxiliary observations (that are sampled from the likelihood) and then assign weights according to the ``distances'' between the auxiliary observations and the actual observations; ABC-based methods are further reviewed in Section \ref{sec:abc}. It is straightforward to construct ABC versions of existing cSMC algorithms based on the bootstrap filter \citep[BF,][]{gordon1993novel}; however, in practice the particles they generate can tend to have many near-zero weights due to large ``distances'', i.e., the algorithms suffer from weight degeneracy. In the SMC literature, specific techniques to mitigate weight degeneracy include the SMC sampler with annealed importance sampling \citep{del2006sequential}, weight tempering via lookahead strategies \citep{lin2013lookahead}, and drawing multiple descendants per particle \citep{Hou2024-rk}; however, as cSMC must be embedded within every iteration of particle Gibbs, these techniques would be computationally expensive to apply. In contrast, the auxiliary particle filter \citep[APF,][]{pitt1999filtering} is a common alternative to the BF that can help reduce weight degeneracy, since the resampling step of the APF accounts for the one-step-ahead observation.

In this paper, we use the APF as a strategy to reduce weight degeneracy and improve parameter estimation in the particle Gibbs and ABC setting. In related work, \cite{vankov2019filtering} proposed an ABC-based APF that uses auxiliary observations to assign importance weights for PMCMC. However, that ABC-based APF has only been embedded within a particle Metropolis-Hastings algorithm; a Gibbs sampler is preferred over Metropolis-Hastings proposals when closed-form conditional distributions are available for the parameters. Therefore, as the main contribution of this paper, we propose to embed an ABC-based APF within a particle Gibbs algorithm. We show that our proposed algorithm satisfies the form of cSMC, and thus admits the target posterior distribution as invariant. We then perform inference on Model \eqref{SVM(1)} when $Z_t$ is assumed to follow the stable distribution. The results demonstrate that our particle Gibbs algorithm significantly outperforms existing ones, thereby providing a useful computational method for handling SSMs with intractable likelihoods.

The paper is organized as follows: in Section \ref{sec:methodology}, we review ABC and particle Gibbs methods, and present the proposed ABC-based particle Gibbs algorithm with a conditional auxiliary particle filter (ABC-PG-cAPF). In Section \ref{sec:simulation}, we illustrate the effectiveness of the proposed sampler for inference of Model \eqref{SVM(1)} with the stable distribution. In Section \ref{section:application}, we present a real data application by fitting an SVM to S\&P 500 daily returns during the 2008--2009 financial crisis. In Section \ref{section: conclusion}, we briefly summarize the paper and its contributions and discuss some potential future directions.

\section{Methodology}\label{sec:methodology}
\subsection{Model setup}\label{sec:setup}

The \textit{stable distribution} is defined via its characteristic function \citep{nolan2020univariate}: we denote $Z \sim SD(\alpha, \beta, \gamma, \delta)$ if $Z$ follows a stable distribution with parameters  $\alpha \in (0,2]$ for heavy-tailedness, $\beta \in [-1, 1]$ for skewness, $\gamma \in [0, \infty)$ for scale and $\delta \in (-\infty, \infty)$ for location, and $Z$ has characteristic function
$$
\psi_Z(t)=\left\{\begin{aligned} & \exp \left(-\gamma^{\alpha}|t|^{\alpha}\left[1+i \beta \operatorname{sign}(t) \tan \left(\frac{\pi\alpha}{2}\right) \left\{(\gamma|t|)^{1-\alpha}-1\right\}\right]+i \delta t\right) \quad \text { if } \alpha \neq 1 \\ \\ & \exp \left(-\gamma|t|\left\{1+\frac{2}{\pi}i \beta \operatorname{sign}(t) \log (\gamma|t|)\right\}+i \delta t\right) \quad \text { if } \alpha=1. \end{aligned}\right.
$$
Its corresponding density function, $f_Z(z) = \frac{1}{2\pi} \int_{-\infty}^\infty \psi_Z(t) \exp(-izt)dt$, does not have an analytical form in general; some basic facts are that if $\alpha > 1$, $E[Z] = \delta - \beta \gamma\tan(\pi\alpha/2)$ (and undefined otherwise); if $\alpha < 2$, $Var(Z) = \infty$ \citep{mandelbrot1963variation,nolan1997numerical,nolan2020univariate}. Evaluation of $f_Z(z)$ is expensive and commonly needs to be approximated by the fast Fourier transform \citep{mittnik1999computing} or numerical integration \citep{nolan1997numerical,nolan1999algorithm}, but it is feasible to simulate realizations of $Z$ according to $\psi_Z(t)$ based on the work of \cite{kanter1975stable} and \cite{chambers1976method}; in this paper, we use the R package \textit{stabledist} \citep{R:stabledist} for generating random draws of $Z$.

The main model of interest in this paper is then the SVM specified in \eqref{SVM(1)} with $\epsilon_t \sim N(0,1)$ and $Z_t \sim SD(\alpha, \beta, \gamma, \delta)$ for $t = 1, \ldots, T$, all independent. We also set $h_0 \sim \text{Log-normal}(\tau/(1-\phi), \sigma_h^2/(1-\phi^2))$ according to the stationarity of the log-volatility process. The goal is to infer the parameters $\theta = (\tau, \phi, \sigma_h^2)$ given a time series of returns $r_t$. We take a Bayesian approach to inference and assign conjugate priors for the parameters: 
$$
\sigma_h^2 \sim IG(a_0, b_0), \quad
    (\tau, \phi)\mid \sigma_h^2 \sim N(\boldsymbol{\mu}_0, \sigma_h^2 \boldsymbol{\Lambda}_0^{-1}) \text{ with } |\phi| < 1
$$
i.e., the joint prior for $\theta$ is a truncated normal--inverse Gamma with hyperparameters $a_0, b_0, \boldsymbol{\mu}_0, \boldsymbol{\Lambda}_0$, which we denote as $NIG(a_0, b_0, \boldsymbol{\mu}_0, \boldsymbol{\Lambda}_0)$; the truncated support for $\phi$ ensures the required second-order stationarity of the log-volatility process. The conjugacy of these priors follows from the results of Bayesian linear regression, i.e., by defining
$$
\mathbf{X} = \begin{bmatrix}
1 & \log(h_0)\\
\vdots & \vdots\\
1 & \log(h_{T-1})
\end{bmatrix} \text{ and  }
\boldsymbol{y} = \begin{bmatrix}
\log(h_{1})\\
\vdots\\
\log(h_{T})
\end{bmatrix},
$$
then $p(\theta \mid h_{0:T}, r_{1:T}) = p(\theta \mid h_{0:T})$ is a truncated $NIG(a_T, b_T, \boldsymbol{\mu}_T, \boldsymbol{\Lambda}_T)$ with $|\phi| < 1$, where
$$
\begin{aligned} 
\boldsymbol{\Lambda}_{T} &=\left(\mathbf{X}^{\mathrm{T}} \mathbf{X}+\mathbf{\Lambda}_{0}\right), \quad
\boldsymbol{\mu}_{T} =\boldsymbol{\Lambda}_{T}^{-1}\left(\boldsymbol{\Lambda}_{0} \boldsymbol{\mu}_{0}+\mathbf{X}^{\mathrm{T}} \boldsymbol{y}\right), \\ 
a_{T} &=a_{0}+\frac{T}{2}, \quad
b_{T} =b_{0}+\frac{1}{2}\left(\mathbf{y}^{\mathrm{T}} \mathbf{y}+\boldsymbol{\mu}_{0}^{\mathrm{T}} \boldsymbol{\Lambda}_{0} \boldsymbol{\mu}_{0}-\boldsymbol{\mu}_{T}^{\mathrm{T}} \boldsymbol{\Lambda}_{T} \boldsymbol{\mu}_{T}\right). \end{aligned}
$$
Thus, a Gibbs sampler update of $\theta$ only depends on these sufficient statistics. The choice of hyperparameters could be informed by previous empirical studies, or set to resemble a flat or weakly informative prior.

\subsection{Particle Gibbs for the SVM with tractable likelihoods} \label{sec:csmc}

In the SVM context, a particle Gibbs algorithm alternates between sampling from $p(\theta \mid h_{0:T}, r_{1:T})$ and $p(h_{0:T} \mid \theta, r_{1:T})$. As described in the Introduction, sampling from $p(h_{0:T} \mid \theta, r_{1:T})$ requires the use of conditional SMC algorithms. 
The basic idea of cSMC is to take an input trajectory (i.e., a sample for $h_{0:T}$) as a reference and produce an output trajectory via SMC-style propagation; furthermore, cSMC accounts for all random variables generated during propagation via an extended target distribution of higher dimension \citep{chopin2015particle}. The steps for a sweep of particle Gibbs then consist of (i) updating the parameters based on the input trajectory, (ii) generating new trajectories based on the updated parameters and input trajectory, and (iii) selecting an output trajectory as input for the next iteration
\citep{andrieu2010particle}. The use of cSMC in particle Gibbs guarantees the target distribution is admitted as the invariant density.
Here, we briefly review two existing cSMC algorithms that are applicable within particle Gibbs when the likelihood $l_t(r_t \mid h_t)$ can be computed.

The first is the conditional bootstrap filter \citep[cBF,][]{andrieu2010particle} as summarized in Algorithm \ref{CBF}. A key feature is that it preserves the input trajectory $h_{0:T}^*$ throughout propagation and resampling; holding $h_{0:T}^*$  intact, $N-1$ new trajectories are generated ``conditional on'' $h_{0:T}^*$ in SMC fashion; finally one of the $N$ trajectories (i.e., among the input trajectory and the $N-1$ generated trajectories) is selected to be the new input trajectory for the next iteration. As shown in Algorithm \ref{CBF}, for any particle $h_t^{(n)}$ with $n \in \{1,\dots,N\}$ simulated at step $t$, we denote the index of the ancestor particle of $h_t^{(n)}$ by $a_{t-1}^{(n)}$, i.e., $h_t^{(n)}$ is propagated from $h_{t-1}^{(a_{t-1}^{(n)})}$. For simplicity, let $A_{t, t}^{(n)} = n$, $A_{t-1, t}^{(n)} = a_{t-1}^{(n)}$ and $A_{t-l,t}^{(n)} = a_{t-l}^{(A_{t-l+1,t}^{(n)})}$ for $2 \leq l \leq t$; then the $n$-th trajectory can be written as $h_{0:T}^{(n)} = (h_0^{(A_{0,T}^{(n)})}, h_1^{(A_{1,T}^{(n)})}, \dots, h_T^{(A_{T, T}^{(n)})})$ after $t=T$ propagation and resampling steps.

\begin{algorithm}[t]
 \setstretch{1}
 \SetAlgoLined
 input: observations $r_{1:T}$, input trajectory $h^*_{0:T}$, transition density $g_t$, likelihood $l_t$\;
 draw $h_0^{(n)} \sim g_0(h_0)$ for $n=1,\dots,N-1$ and set $h_0^{(N)} = h_0^*$\;
 $w_0^{(n)} = \frac{1}{N}$ for $n=1,\dots,N$\;
 
 \For{t in 1:T}{
 draw index $a_{t-1}^{(n)}$ from $\{(n, w_{t-1}^{(n)})\}_{n=1}^N$ for $n=1,\dots,N-1$\;
 
 set index $a_{t-1}^{(N)} = N$\;
 
 draw $h_t^{(n)} \sim g_t(h_{t} \mid h_{t-1}^{(a_{t-1}^{(n)})})$ for $n=1,\dots,N-1$ and set $h_t^{(N)} = h_t^*$\;
 
 $w_t^{(n)} = l_t(r_t \mid h_t^{(n)})$ for $n=1,\dots,N$\;
 }
 draw index $b$ from $\{(n, w_{T}^{(n)})\}_{n=1}^N$\;
 \Return{$h_{0:T}^{(b)}$}\;
 \caption{Conditional Bootstrap Filter}
 \label{CBF}
\end{algorithm}

The second is the conditional bootstrap filter with ancestor sampling \citep[cBFAS,][]{lindsten2014particle} as summarized in Algorithm \ref{CBFAS}. The cBF keeps the input trajectory $h_{0:T}^*$ intact, i.e., $A_{t, T}^{(N)} = N$, and preserved throughout for all $t\in\{0, \dots, T-1\}$, so the early parts of generated trajectories may tend to closely resemble (or be identical to) the input trajectory, which when too extreme is known as path degeneracy. The key idea of cBFAS is to stochastically perturb the input trajectory by breaking it into pieces via ancestor sampling. Ancestor sampling, as presented in Algorithm \ref{CBFAS}, takes $A_{t, T}^{(N)}$ to be stochastic, so the input trajectory can be partially replaced by other generated trajectories at each step $t$. Consequently, the input trajectory interacts much more with the other trajectories \citep{svensson2015nonlinear}, which can help cBFAS mitigate path degeneracy while maintaining the target distribution as invariant \citep{lindsten2014particle,svensson2015nonlinear}.

\begin{algorithm}[t]
\setstretch{1}
 \SetAlgoLined
 input: observations $r_{1:T}$, input trajectory $h^*_{0:T}$, transition density $g_t$, likelihood $l_t$\;
 draw $h_0^{(n)} \sim g_0(h_0)$ for $n=1,\dots,N-1$ and set $h_0^{(N)} = h_0^*$\;
 $w_0^{(n)} = \frac{1}{N}$ for $n=1,\dots,N$\;
 
 \For{t in 1:T}{
 draw index $a_{t-1}^{(n)}$ from $\{(n, w_{t-1}^{(n)})\}_{n=1}^N$ for $n=1,\dots,N-1$\;
 
 draw index $a_{t-1}^{(N)}$ from $\{(n, w_{t-1}^{(n)}g_{t}(h_t^* \mid h_{t-1}^{(n)}))\}_{n=1}^{N}$\;
 
 draw $h_t^{(n)} \sim g_t(h_{t} \mid h_{t-1}^{(a_{t-1}^{(n)})})$ for $n=1,\dots,N-1$ and set $h_t^{(N)} = h_t^*$\;
 
 $w_t^{(n)} = l_t(r_t \mid h_t^{(n)})$ for $n=1,\dots,N$\;
 }
 draw index $b$ from $\{(n, w_{T}^{(n)})\}_{n=1}^N$\;
 \Return{$h_{0:T}^{(b)}$}\;
 \caption{Conditional Bootstrap Filter with Ancestor Sampling}
 \label{CBFAS}
\end{algorithm}

\subsection{Review of ABC methods and ABC-SMC methods} \label{sec:abc}

To briefly review ABC methods, consider observations $r$ and parameters $\theta$ where the likelihood $l(r \mid \theta)$ is intractable and expensive to evaluate. Consequently, the posterior $p(\theta \mid r) \propto p(\theta)l(r \mid \theta)$ is also intractable. In this setting, ABC can be used for inference of $\theta$ if sampling from $l(\cdot \mid \theta)$ is straightforward. The basic idea of ABC methods is to construct an approximation to the posterior with the help of auxiliary observations (denoted by $u$) and an ABC kernel (denoted by $K_\epsilon(r \mid u)$). The simplest example of an ABC method is a likelihood-free rejection sampling algorithm with a uniform kernel (i.e., $K_\epsilon(r \mid u) \propto \mathbbm{1}_{u-\epsilon < r < u+\epsilon}$ for a chosen $\epsilon>0$) that implements the following steps: (1) sample a candidate $\theta^*$ from the prior $p(\theta)$; (2) sample a realization $u$ from $l(u \mid \theta^*)$; (3) accept $(\theta^*, u)$ if $|r-u| < \epsilon$. Then accepted samples of $(\theta^*, u)$ follow the density defined by
$p_\epsilon(\theta, u \mid r) \propto p(\theta) l(u \mid \theta)K_\epsilon(r\mid u)$, 
which can be viewed as the ABC approximation of the extended distribution $p(\theta, u \mid r)$. In practice, Gaussian kernels (i.e., $K_\epsilon(r \mid u) \propto \exp{[(r-u)^2/(2\epsilon^2})]$) are more commonly used than uniform ones. It is clear that $p_\epsilon(\theta \mid r) \xrightarrow{\text{d}} p(\theta \mid r)$ as $\epsilon \rightarrow 0$ and $p_\epsilon(\theta \mid r) \xrightarrow{\text{d}} p(\theta)$ as $\epsilon \rightarrow \infty$. Therefore, an ABC kernel with a small $\epsilon$ can lead to many near-zero weights of generated candidates $(\theta^*, u)$ (or more rejected samples), while a larger $\epsilon$ can lead to more uniform weights (or more accepted samples); however, the accuracy of the ABC approximation will decrease as a tradeoff.

For an SSM with intractable likelihoods, we can similarly construct the ABC approximation $p_\epsilon(h_{0:T}, u_{1:T} \mid r_{1:T}, \theta)$ of the corresponding extended distribution and sample from it using an SMC algorithm; this is known as ABC-SMC \citep{peters2012sequential}. Thus, the target distribution of ABC-SMC involves the auxiliary observations $u_{1:T}$ for evaluating importance weights and has the form
\begin{align}
    p_\epsilon(h_{0:T}, u_{1:T} \mid r_{1:T}, \theta) \propto g_0(h_0 \mid \theta)\prod_{t=1}^T g_t(h_t \mid h_{t-1}, \theta)  K_\epsilon(r_{t} \mid u_{t}) l_t(u_{t}\mid h_{t}).
    \label{p_eps_abcsmc}
\end{align}
To sample from \eqref{p_eps_abcsmc}, \cite{vankov2019filtering} proposed an ABC-based APF, which is summarized in Algorithm \ref{ABC-APF} and applicable within a particle Metropolis-Hastings algorithm. Within a particle Gibbs algorithm, however, Algorithm \ref{ABC-APF} cannot be directly used (recall that particle Gibbs requires a cSMC setup to admit the target distribution as invariant). Thus, in the following we propose a new ABC-based APF, which we call the ABC-based conditional auxiliary particle filter (ABC-cAPF), that can be embedded within a particle Gibbs algorithm.

\begin{algorithm}[t]
\setstretch{1}
 \SetAlgoLined
 input: observations $r_{1:T}$, input trajectory $h^*_{0:T}$, transition density $g_t$, likelihood $l_t$\;
 draw $h_0^{(n)} \sim g_0(h_0)$ for $n=1,\dots,N-1$ and set $h_0^{(N)} = h_0^*$\;
 $w_0^{(n)} = \frac{1}{N}$ for $n=1,\dots,N$\;
 
 \For{t in 1:T}{

    \For{n in 1:N}{
        $\widetilde{w}_{t-1}^{(n)} = w_{t-1}^{(n)}\int\int K_\epsilon(r_t \mid u_t)l_t(u_t \mid h_t)g_t(h_t \mid h_{t-1}^{(n)}) dh_tdu_t$\;
 
        draw index $a_{t-1}^{(n)}$ from $\{(n,\widetilde{w}_{t-1}^{(n)})\}_{n=1}^N$\;
 
        draw $h_t^{(n)} \sim g_t(h_{t} \mid h_{t-1}^{(a_{t-1}^{(n)})})$\;

        draw $u_t^{(n)} \sim l_t(u_t \mid h_t^{(n)})$\;
 
        $w_t^{(n)} = \frac{w_{t-1}^{(a_{t-1}^{(n)})}}{\widetilde{w}_{t-1}^{(a_{t-1}^{(n)})}}K_\epsilon(r_t \mid u_t^{(n)}) $\;
    
    }
 }
 draw index $b$ from $\{(n, w_{T}^{(n)})\}_{n=1}^N$\;
 \Return{$h_{0:T}^{(b)}$}\;
 \caption{ABC-based Auxiliary Particle Filter}
 \label{ABC-APF}
\end{algorithm}

\subsection{Likelihood-free ABC-based cSMC for the SVM}\label{sec:ABC-cAPF}

Now we consider the SVM with stable distribution as presented in Section \ref{sec:setup}. In the ABC setting with a chosen kernel $K_\epsilon$, Model \eqref{SVM(1)} can then be rewritten as
\begin{align}
\log(h_t) = \tau + \phi \log(h_{t-1}) + \sigma_h \epsilon_t, \quad
    u_t = \sqrt{h_t} \times Z_t, \quad
    r_t \sim K_\epsilon(\cdot \mid u_t)
    \label{ABC-SVM(1)}
\end{align}
with $\epsilon_t \sim N(0, 1)$ and $Z_t \sim SD(\alpha, \beta, \gamma, \delta)$ all independent for $t=1\ldots, T$,  and $h_0 \sim \text{Log-normal}(\tau/(1-\phi), \sigma_h^2/(1-\phi^2))$. Here, each $h_t$, $t=1\ldots, T$ corresponds to an auxiliary observation $u_t$ for importance weight calculation according to  $K_\epsilon$. The extended posterior distribution of interest is then
\begin{align}
p_\epsilon(\theta, h_{0:T}, u_{1:T} \mid r_{1:T}) \propto p(\theta)g_0(h_0 \mid \theta)\prod_{t=1}^T g_t(h_t \mid h_{t-1}, \theta)  K_\epsilon(y_{t} \mid u_{t}) l_t(u_{t}\mid h_{t}),
    \label{p_eps_full}
\end{align}
and the goal is to sample from \eqref{p_eps_full} in a likelihood-free manner, i.e., without computing $l_t$. We shall focus on the particle Gibbs case, and develop ABC-based algorithms to sample from \eqref{p_eps_full} that alternate between sampling from $p(\theta \mid h_{0:T}, u_{1:T}, r_{1:T}) = p(\theta \mid h_{0:T})$ and $p_\epsilon(h_{0:T}, u_{1:T} \mid r_{1:T}, \theta)$.

The cBF and cBFAS algorithms can be applied in the ABC setting with slight modifications: in the step where we draw each $h_t^{(n)} \sim g_t(h_t \mid h_{t-1}^{(a_{t-1}^{(n)})})$, we also draw an auxiliary observation $u_t^{(n)}$ from $l_t(u_t \mid h_t^{(n)})$ and then assign the importance weight $w_t^{(n)} = K_\epsilon(r_t \mid u_t^{(n)})$ for the particle $(h_t^{(n)}, u_t^{(n)})$. We shall call these algorithms the ABC-based conditional bootstrap filter (ABC-cBF) and ABC-based conditional bootstrap filter with ancestor sampling (ABC-cBFAS), respectively. However, these two algorithms can encounter severe weight degeneracy in practice, if we choose a small $\epsilon$ in the ABC kernel to obtain an accurate approximation of the true target distribution. Thus, in the following we also propose a novel ABC-based cSMC algorithm using the auxiliary particle filter as the building block.

We shall call this third algorithm the ABC-based conditional auxiliary particle filter (ABC-cAPF), as presented in Algorithm \ref{ABC-CAPF}. To initialize the ABC-cAPF at $t=0$, $N-1$ particles, $\{h_0^{(n)}\}_{n=1}^{N-1}$, are sampled from the log-normal distribution and $h_0^{(N)}$ is assigned the initial log-volatility of the input trajectory. Then after the $(t-1)$-th propagation step ($t = 1, \ldots, T$), the $N$-th particle is set to be the input trajectory $h_{0:t-1}^*$ and the remaining $N-1$ particles $\{(h_{0:t-1}^{(n)}, u_{1:t-1}^{(n)})\}_{n=1}^{N-1}$ are generated with weights that satisfy
$$
w_{t-1}^{(n)} \propto g_0(h_0^{(A_{0,t-1}^{(n)})}) \prod_{s=1}^{t-1} K_\epsilon(r_s \mid u_s^{(A_{s,t-1}^{(n)})})l_s(u_s^{(A_{s,t-1}^{(n)})} \mid h_s^{(A_{s,t-1}^{(n)})})g_s(h_s^{(A_{s,t-1}^{(n)})} \mid h_{s-1}^{(A_{s-1,t-1}^{(n)})})
$$
for $n=1,\dots,N-1$. Following the concept of properly weighted particles in SMC \citep{liu2001monte}, this weight ensures that the set of $N-1$ generated particles is properly weighted with respect to $p_\epsilon(h_{0:t-1}, u_{1:t-1} \mid r_{1:t-1}, \theta)$. As an important feature of the APF \citep{pitt1999filtering} when propagating the particles from $t-1$ to $t$, the tempered weights $\widetilde{w}_{t-1}^{(n)}$, i.e., the adjusted importance weights that incorporate the one-step-ahead observation $r_t$, are computed and utilized; in the ABC setting, the computation of these tempered weights require an intractable integration, namely
\begin{equation}
    \widetilde{w}_{t-1}^{(n)} =  w_{t-1}^{(n)}p(r_t \mid h_{t-1}) = w_{t-1}^{(n)} \int\int K_\epsilon(r_t \mid u_t) l_t(u_t \mid h_t) g_t(h_t \mid h_{t-1}^{(n)}) dh_tdu_t
    \label{eq:tempering}
\end{equation}
where the double integral does not have a closed form; \cite{vankov2019filtering} suggest approximating it with a more heavy-tailed distribution such as a $t$-class distribution. 

Specifically, given $r_t = \sqrt{h_t} \times Z_{\text{stable}}$, we construct an approximation of the stable distribution noise with a standard Cauchy noise as $\log(r_t^2) = \log(h_t) + \log(Z_{\text{Cauchy}}^2)$ (thus $E\{\log(r_t^2) \mid h_{t-1}\} = \tau + \phi\log(h_{t-1})$ and $Var\{\log(r_t^2) \mid h_{t-1}\} = \sigma_h^2 + \pi^2$). Then we approximate $\log(r_t^2)$ with a linear combination of $\log(Z_{\text{Cauchy}}^2)$ and $h_{t-1}$ to match the conditional mean and variance, which gives $\log(Z_{\text{cauchy}}^2) = \sqrt{\frac{\pi^2}{\sigma_h^2 + \pi^2}} \{ \log(r_t^2) - \tau - \phi\log(h_{t-1}) \}$, so $p(r_t \mid h_{t-1})$ is approximately proportional to
$$
\left(1 + (r_t^{2})^{\sqrt{\frac{\pi^2}{\sigma_h^2 + \pi^2}}}\exp\left[-\sqrt{\frac{\pi^2}{\sigma_h^2 + \pi^2}}\left\{\tau+\phi\log(h_{t-1})\right\}\right]\right)^{-1}.
$$
Note that the quality of the integral approximation in \eqref{eq:tempering} does not influence the validity of ABC-cAPF; however, the tempered weights should cover the high-density regions of the true importance weights for a more efficient algorithm.

For $t = 1,\dots,T$, resampling and propagation are implemented with the tempered weights, but otherwise similar to the cBF: (i) resample the first $N-1$ particles from $\{(h_{0:t-1}^{(n)}, u_{1:t-1}^{(n)})\}_{n=1}^N$ proportional to the tempered weights $\{\widetilde{w}_{t-1}^{(n)}\}_{n=1}^N$ while leaving the $N$-th particle intact as the input trajectory; (ii) propagate the resampled particles as in the cBF; (iii) compute the importance weights of the propagated particles so that cAPF targets the same density as cBF, i.e., the weights of the propagated particles $\{(h_{0:t}^{(n)}, u_{1:t}^{(n)})\}_{n=1}^{N-1}$ satisfy
$$
w_{t}^{(n)} \propto g_0(h_0^{(A_{0,t}^{(n)})}) \prod_{s=1}^{t} K_\epsilon(r_s \mid u_s^{(A_{s,t}^{(n)})})l_s(u_s^{(A_{s,t}^{(n)})} \mid h_s^{(A_{s,t}^{(n)})})g_s(h_s^{(A_{s,t}^{(n)})} \mid h_{s-1}^{(A_{s-1,t}^{(n)})}).
$$

\begin{algorithm}[t]
\setstretch{1}
 \SetAlgoLined
 input: observations $r_{1:T}$, input trajectory $h^*_{0:T}$, transition density $g_t$, likelihood $l_t$\;
 draw $h_0^{(n)} \sim g_0(h_0)$ for $n=1,\dots,N-1$ and set $h_0^{(N)} = h_0^*$\;
 $w_0^{(n)} = \frac{1}{N}$ for $n=1,\dots,N$\;
 
 \For{t in 1:T}{

 compute $\widetilde{w}_{t-1}^{(n)}$ using $w_{t-1}^{(n)}$, $r_t$ and $h_{t-1}^{(n)}$ for $n=1,\dots,N$\;
 
 draw index $a_{t-1}^{(n)}$ from $\{(n,\widetilde{w}_{t-1}^{(n)})\}_{n=1}^N$ for $n=1,\dots,N-1$ and set index $a_{t-1}^{(N)} = N$\;
 
 draw $h_t^{(n)} \sim g_t(h_{t} \mid h_{t-1}^{(a_{t-1}^{(n)})})$ for $n=1,\dots,N-1$ and set $h_t^{(N)} = h_t^*$\;

 draw $u_t^{(n)} \sim l_t(u_t \mid h_t^{(n)})$ for $n=1,\dots,N$\;
 
 $w_t^{(n)} = \frac{w_{t-1}^{(a_{t-1}^{(n)})}}{\widetilde{w}_{t-1}^{(a_{t-1}^{(n)})}}K_\epsilon(r_t \mid u_t^{(n)})$ for $n=1,\dots,N-1$ and $w_t^{(N)} = K_\epsilon(r_t \mid u_t^{(n)})$\;
 }
 draw index $b$ from $\{(n, w_{T}^{(n)})\}_{n=1}^N$\;
 \Return{$h_{0:T}^{(b)}$}\;
 \caption{ABC-based Conditional Auxiliary Particle Filter}
 \label{ABC-CAPF}
\end{algorithm}

\begin{algorithm}[t]
\setstretch{1}
 \SetAlgoLined
 initialization: $\theta[0]$, input trajectory $h^*_{0:T}[0]$, burning size $S$, sample size $N$
 
 \For{t in 1:($S+N$)}{

 run Algorithm \ref{ABC-CAPF} with $\theta[t-1]$ and $h^*_{0:T}[t-1]$ and output $h^*_{0:T}[t]$\;

 update the sufficient statistics with $h^*_{0:T}[t]$\;

 sample $\theta[t]$ from the posterior $NIG(a_T, b_T, \boldsymbol{\mu}_T, \boldsymbol{\Lambda}_T)$\;
 }

 \Return{$\theta[(S+1):N]$}\;
 \caption{ABC-based Particle Gibbs with the Conditional Auxiliary Particle Filter}
 \label{ABC-PG-CAPF}
\end{algorithm}

Each of the three algorithms, i.e., ABC-cBF, ABC-cBFAS, and ABC-cAPF, can be combined with a Gibbs sampler for the parameters $p(\theta \mid h_{0:T}, u_{1:T}, r_{1:T})$ to construct particle Gibbs samplers for \eqref{p_eps_full}. We call these algorithms the ABC-based particle Gibbs with ABC-cBF (ABC-PG-cBF), ABC-based particle Gibbs with ABC-cBFAS (ABC-PG-cBFAS), and ABC-based particle Gibbs with the conditional auxiliary particle filter (ABC-PG-cAPF), respectively. Algorithm \ref{ABC-PG-CAPF} shows how the ABC-cAPF is embedded within the particle Gibbs sampler: each Gibbs sweep uses the ABC-cAPF to draw an output trajectory for $h_{0:T}$ (given the current draw of $\theta$ and the input trajectory), and then draws an updated $\theta$ from its closed-form NIG conditional posterior (given the output trajectory, which becomes the input trajectory for the next iteration). The ABC-PG-cBF and ABC-PG-cBFAS algorithms are constructed analogously. To initialize the particle Gibbs sampler, we draw parameters from $p(\theta)$ and an input trajectory $h_{0:T}^*$ from the log-normal distribution given those parameters.
\begin{proposition} \label{thm:proper}
The proposed particle Gibbs algorithm (ABC-PG-cAPF in Algorithm \ref{ABC-PG-CAPF}) admits the target distribution, i.e., the posterior distribution $p_\epsilon(h_{0:T}, u_{1:T}, \theta \mid r_{1:T})$ in \eqref{p_eps_full}, as the invariant density under some mild assumptions.
\end{proposition}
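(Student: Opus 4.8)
The plan is to reduce Proposition~\ref{thm:proper} to the single claim that the ABC-cAPF (Algorithm~\ref{ABC-CAPF}) defines a valid conditional SMC update, i.e.\ a Markov kernel on trajectory space that admits $p_\epsilon(h_{0:T}, u_{1:T} \mid r_{1:T}, \theta)$ as its invariant density. Given this, Algorithm~\ref{ABC-PG-CAPF} is a two-component Gibbs sampler on the extended space: the $\theta$-update draws from the exact full conditional $p(\theta \mid h_{0:T})$ (available in closed form by the conjugacy established in Section~\ref{sec:setup}), and the hidden-state update is the cSMC kernel; both leave the target invariant, so by the general particle Gibbs argument of \citet{andrieu2010particle} their composition does as well. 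The entire burden therefore rests on the cSMC step.

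To prove the cSMC step is valid, I would adopt the extended-target construction of \citet{andrieu2010particle} and \citet{chopin2015particle}. First I would define an extended target $\tilde{\pi}_\epsilon$ on the full collection of variables produced by one run of Algorithm~\ref{ABC-CAPF} --- all particles $\{(h_t^{(n)}, u_t^{(n)})\}$, all ancestor indices $a_{t-1}^{(n)}$, and the output index $b$ --- designed so that (i) its marginal on the selected trajectory $(h_{0:T}^{(b)}, u_{1:T}^{(b)})$ equals $p_\epsilon(h_{0:T}, u_{1:T} \mid r_{1:T}, \theta)$, and (ii) the conditional law of all remaining variables given that the reference path occupies slot $N$ coincides with the generative steps of the algorithm. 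The central identity to verify is the proper-weighting relation
\begin{equation*}
w_t^{(n)} \propto \frac{p_\epsilon(h_{0:t}^{(n)}, u_{1:t}^{(n)} \mid r_{1:t}, \theta)}{q_t(h_{0:t}^{(n)}, u_{1:t}^{(n)})},
\end{equation*}
where $q_t$ is the proposal induced by the tempered resampling and the propagation from $g_t$ and $l_t$; once this holds, drawing $b$ proportional to $\{w_T^{(n)}\}$ returns an output path with the correct marginal, and invariance follows because the cSMC kernel is itself a Gibbs-type update on $\tilde{\pi}_\epsilon$.

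The step where the argument departs from the bootstrap case (ABC-cBF) is the APF weighting. Since resampling at step $t$ uses the tempered weights $\widetilde{w}_{t-1}^{(n)}$ rather than $w_{t-1}^{(n)}$, each propagated particle carries the correction factor $w_{t-1}^{(a_{t-1}^{(n)})}/\widetilde{w}_{t-1}^{(a_{t-1}^{(n)})}$, and I would show that this factor exactly undoes the tempering introduced at resampling, so that the telescoping product of incremental weights collapses to the target-over-proposal ratio above. A useful consequence I would state explicitly is that $\widetilde{w}_{t-1}^{(n)}$ may be any strictly positive function of $(r_t, h_{t-1}^{(n)}, w_{t-1}^{(n)})$ without affecting correctness, which is exactly why the approximate integration in \eqref{eq:tempering} is immaterial to validity and only influences efficiency. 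I would then verify that the reference particle at slot $N$ is treated consistently: because its ancestor is fixed to $a_{t-1}^{(N)} = N$ rather than drawn from $\{\widetilde{w}_{t-1}^{(n)}\}$, no resampling correction is warranted for it, and its incremental weight $K_\epsilon(r_t \mid u_t^{(N)})$ --- assigned without the $w/\widetilde{w}$ ratio --- is precisely what the extended-target construction requires to keep the conditioned reference lineage properly weighted.

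The main obstacle I anticipate is the bookkeeping of this cancellation: establishing that the tempered weights enter $\tilde{\pi}_\epsilon$ only through the resampling kernel and are cancelled exactly by the $w/\widetilde{w}$ correction, so that the output marginal of $\tilde{\pi}_\epsilon$ is untouched by the (possibly crude) tempering, while simultaneously accounting for the asymmetric weight assigned to the reference path at slot $N$. Once this is verified, the ``mild assumptions'' of the statement amount to requiring that $K_\epsilon(r_t \mid u_t)\, l_t(u_t \mid h_t) > 0$ on the relevant support, so that all weights are well defined and positive, and that the initial reference path lies in the support of the target; these ensure $\tilde{\pi}_\epsilon$ is a proper density and complete the proof that $p_\epsilon(h_{0:T}, u_{1:T}, \theta \mid r_{1:T})$ is invariant for Algorithm~\ref{ABC-PG-CAPF}.
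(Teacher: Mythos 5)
Your plan is correct in substance but takes a genuinely different route from the paper's proof. You propose to establish the validity of the ABC-cAPF from first principles via the extended-target construction: define $\tilde{\pi}_\epsilon$ over all particles, ancestor indices and the output index, check that the marginal of the selected path is the target and that the algorithm's generative steps coincide with the conditional of $\tilde{\pi}_\epsilon$ given the reference lineage, the key step being the telescoping cancellation of the tempered resampling probabilities against the $w_{t-1}/\widetilde{w}_{t-1}$ correction factors. The paper instead argues by reduction to the conditional bootstrap filter: it invokes \citet{andrieu2010particle} for the validity of particle Gibbs with the cBF on the same extended ABC target, and then shows only that, given the same input trajectory, the cAPF and the cBF produce particle systems that are properly weighted for the same sequence of distributions $\pi_t$, via the identity $E\{w_t^{(\mathrm{cAPF})}(x_{0:t})h(x_{0:t})\} = E\{w_t^{(\mathrm{cBF})}(x_{0:t})h(x_{0:t})\}$ with Radon--Nikodym factor $\eta_t(x_{0:t}) = \prod_{s=0}^{t-1}\widetilde{w}_s$, together with the observation that the reference particle receives the same weight under both filters. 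Your route is more self-contained and makes explicit exactly where the tempering enters and cancels --- and hence why any strictly positive $\widetilde{w}_{t-1}$ preserves correctness, a fact the paper asserts in the main text but does not re-derive in the proof; the cost is the heavier weight-bookkeeping you yourself flag as the main obstacle. The paper's route is shorter, but its passage from ``properly weighted for the same $\pi_t$'' (an unconditional expectation identity) to ``the two conditional kernels admit the same invariant density'' is left implicit and is ultimately justified by precisely the extended-target machinery you propose to construct. The two arguments agree on the delicate points: the reference particle at slot $N$ carries no $w/\widetilde{w}$ correction and receives the incremental weight $K_\epsilon(r_t \mid u_t^{(N)})$ exactly as under the cBF, and the ``mild assumptions'' you identify (positivity of the kernel and weights on the relevant support) correspond to Assumptions 5--7 of \citet{andrieu2010particle} that the paper invokes.
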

\begin{proof}
See Section S1 of the Supplementary Material.
\end{proof}
While ABC-PG-cAPF is proposed with Model \eqref{ABC-SVM(1)} as the focus in this paper, the computational strategy is not limited to this specific problem and can be applied to other SSMs with intractable likelihoods, e.g., stochastic kinetic models \citep{owen2015scalable,lowe2023accelerating} and models with likelihoods that follow g-and-k distributions \citep{rayner2002numerical,drovandi2011likelihood}.

\section{Simulation study} \label{sec:simulation}

We consider the model described in Equation \eqref{ABC-SVM(1)} with a second-order stationary log-volatility process. Since $h_t$ conditional on $h_{t-1}$ follows a log-normal distribution, we have $E(h_t) = \exp\left\{\frac{\gamma}{1-\phi} + \frac{\sigma_h^2}{2(1-\phi^2)}\right\}$ and $Var(h_t)=E^2(h_t)\left\{\exp\left(\frac{\sigma_h^2}{1-\phi^2}\right)-1\right\}$. Therefore, the (squared) coefficient of variation, $\mathrm{CV} = Var(h_t)/E^2(h_t)$, can be written as $\mathrm{CV} = \exp\left(\frac{\sigma_h^2}{1-\phi^2}\right)-1$. 
Practical ranges of $\phi$ that have been suggested from empirical analyses are $\phi \in [0.9, 0.98]$, or more loosely, $\phi \in [0.8, 0.995]$ \citep{jacquier1994bayesian}; following their simulation study we take $\phi \in \{0.9, 0.95, 0.98\}$, $\mathrm{CV} \in \{0.1, 1, 10\}$, and set $E(h_t) = 0.0009$. Given the values of $\phi$, $\mathrm{CV}$, and $E(h_t)$, the corresponding values of $\gamma$ and $\sigma_h^2$ can be easily computed. Following the setup for $SD(\alpha, \beta, \gamma, \delta)$ in \cite{vankov2019filtering}, we hold $\gamma = 1$ and $\delta = 0$ fixed throughout and consider $(\alpha, \beta) = (1.75, 0.1), (1.7, 0.3), (1.5, -0.3)$ respectively as three experimental settings; (1.75, 0.1) gives the least heavy-tailed stable distribution and the least (right) skewness and is expected to be the easiest to handle; (1.5, -0.3) gives the most heavy-tailed stable distribution and the most (left) skewness and is expected to be the hardest to handle; (1.7, 0.3) has heavy-tailedness between that of the above two cases and the most (right) skewness. Following \cite{yang2018sequential}, we assign a weakly informative $NIG$ conjugate prior for $\theta$ with $a_0=2$, $b_0=0.5$, $\boldsymbol{\mu}_0 = \begin{bmatrix}
0\\
0.9
\end{bmatrix}$ and $\boldsymbol{\Lambda}_0 = \begin{bmatrix}
1 & 0\\
0 & 1
\end{bmatrix}$.

For each combination of $\phi$, CV, and $(\alpha, \beta)$, we simulated a time series of length $T=100$ observations. We then ran each of the particle Gibbs samplers (ABC-PG-cBF, ABC-PG-cBFAS, and ABC-PG-cAPF) using $N=100$ particles for the embedded cSMC algorithm and a Gaussian kernel with $\epsilon = 0.001$. We discarded the first 2000 iterations as burn-in and took the 5000 iterations after the burn-in period as the posterior sample. The posterior means are treated as the Bayes' estimates for the parameters. To compare the performances of the different PG samplers, we repeated this procedure for 100 simulated datasets. The simulation results, summarized by computing the root-mean-squared errors (RMSEs) of the parameter estimates over the 100 repetitions, are presented in Tables \ref{Table:simu_1}, \ref{Table:simu_2}, and \ref{Table:simu_3} for $(\alpha, \beta) =  (1.75, 0.1), (1.7, 0.3), (1.5, -0.3)$, respectively.

\begin{table}[h]
    \centering
    \renewcommand{\arraystretch}{1}
    \begin{tabular}{cc|cccc}
    \hline
    CV & $\phi$ & Algorithm & RMSE $\tau$ & RMSE $\phi$ & RMSE $\sigma_h^2$ \\ 

        \hline
       ~ & ~ & ABC-PG-cBF & 0.518 & 0.068 & 0.463 \\
        10 & 0.9 & ABC-PG-cBFAS & 0.594 & 0.078 & 0.505 \\
        ~ & ~ & ABC-PG-cAPF & 0.173 & 0.025 & 0.141 \\
        \hline
    
        \hline
       ~ & ~ & ABC-PG-cBF & 0.647 & 0.086 & 0.420  \\
        10 & 0.95 & ABC-PG-cBFAS & 0.747 & 0.099 & 0.489  \\
        ~ & ~ & ABC-PG-cAPF & 0.324 & 0.049 & 0.145  \\
        \hline

        \hline
        ~ & ~ & ABC-PG-cBF & 0.817 & 0.112 & 0.414 \\
        10 & 0.98 & ABC-PG-cBFAS & 0.916 & 0.131 & 0.510 \\
        ~ & ~ & ABC-PG-cAPF & 0.547 & 0.079 & 0.214 \\
        \hline
        
        \hline
        ~ & ~ & ABC-PG-cBF & 0.571 & 0.080 & 0.441  \\
        1 & 0.9 & ABC-PG-cBFAS & 0.663 & 0.094 & 0.486 \\
        ~ & ~ & ABC-PG-cAPF & 0.157 & 0.029 & 0.179 \\
        \hline

        \hline
        ~ & ~ & ABC-PG-cBF & 0.819 & 0.116 & 0.466 \\
        1 & 0.95 & ABC-PG-cBFAS & 0.946 & 0.134 & 0.524 \\
        ~ & ~ & ABC-PG-cAPF & 0.425 & 0.068 & 0.216 \\
        \hline

        \hline
        ~ & ~ & ABC-PG-cBF & 0.988 & 0.142 & 0.449 \\
        1 & 0.98 & ABC-PG-cBFAS & 1.125 & 0.161 & 0.516 \\
        ~ & ~ & ABC-PG-cAPF & 0.645 & 0.099 & 0.245 \\

        \hline
        ~ & ~ & ABC-PG-cBF & 0.578 & 0.084 & 0.455  \\
        0.1 & 0.9 & ABC-PG-cBFAS & 0.709 & 0.103 & 0.520 \\
        ~ & ~ & ABC-PG-cAPF & 0.171 & 0.033 & 0.254 \\
        \hline

        \hline
        ~ & ~ & ABC-PG-cBF & 0.855 & 0.123 & 0.463  \\
        0.1 & 0.95 & ABC-PG-cBFAS & 1.012 & 0.146 & 0.526  \\
        ~ & ~ & ABC-PG-cAPF & 0.484 & 0.078 & 0.259  \\
        \hline

        \hline
        ~ & ~ & ABC-PG-cBF & 1.038 & 0.150 & 0.459  \\
        0.1 & 0.98 & ABC-PG-cBFAS & 1.214 & 0.176 & 0.534  \\
        ~ & ~ & ABC-PG-cAPF & 0.699 & 0.109 & 0.268 \\
        \hline

    \end{tabular}
    
    \caption{RMSEs of the SVM parameter estimates for the three particle Gibbs algorithms, based on 100 simulated datasets with $T=100$, $N=100$, $Z_t \sim SD(1.75, 0.1, 1, 0)$, $\epsilon = 0.001$.}
    \label{Table:simu_1}
\end{table}

\begin{table}[h]
    \centering
    \renewcommand{\arraystretch}{1}
    \begin{tabular}{cc|cccc}
    \hline
    CV & $\phi$ & Algorithm & RMSE $\tau$ & RMSE $\phi$ & RMSE $\sigma_h^2$ \\ 

        \hline
       ~ & ~ & ABC-PG-cBF & 0.514 & 0.067 & 0.455  \\
        10 & 0.9 & ABC-PG-cBFAS & 0.610 & 0.079 & 0.503 \\
        ~ & ~ & ABC-PG-cAPF & 0.170 & 0.025 & 0.144 \\
        \hline
    
        \hline
       ~ & ~ & ABC-PG-cBF & 0.653 & 0.087 & 0.431 \\
        10 & 0.95 & ABC-PG-cBFAS & 0.768 & 0.102 & 0.502 \\
        ~ & ~ & ABC-PG-cAPF & 0.335 & 0.051 & 0.156 \\
        \hline

        \hline
        ~ & ~ & ABC-PG-cBF & 0.879 & 0.120 & 0.475 \\
        10 & 0.98 & ABC-PG-cBFAS & 0.993 & 0.136 & 0.530 \\
        ~ & ~ & ABC-PG-cAPF & 0.557 & 0.080 & 0.218 \\
        \hline

        \hline
        ~ & ~ & ABC-PG-cBF & 0.564 & 0.080 & 0.435 \\
        1 & 0.9 & ABC-PG-cBFAS & 0.686 & 0.098 & 0.503 \\
        ~ & ~ & ABC-PG-cAPF & 0.148 & 0.029 & 0.178 \\
        \hline

        \hline
        ~ & ~ & ABC-PG-cBF & 0.831 & 0.117 & 0.477 \\
        1 & 0.95 & ABC-PG-cBFAS & 0.959 & 0.136 & 0.531 \\
        ~ & ~ & ABC-PG-cAPF & 0.443 & 0.071 & 0.226 \\
        \hline

        \hline
        ~ & ~ & ABC-PG-cBF & 1.025 & 0.148 & 0.472 \\
        1 & 0.98 & ABC-PG-cBFAS & 1.167 & 0.167 & 0.550 \\
        ~ & ~ & ABC-PG-cAPF & 0.677 & 0.104 & 0.265 \\

        \hline
        ~ & ~ & ABC-PG-cBF & 0.622 & 0.089 & 0.495 \\
        0.1 & 0.9 & ABC-PG-cBFAS & 0.745 & 0.109 & 0.548 \\
        ~ & ~ & ABC-PG-cAPF & 0.200 & 0.037 & 0.265 \\
        \hline

        \hline
        ~ & ~ & ABC-PG-cBF & 0.892 & 0.129 & 0.487 \\
        0.1 & 0.95 & ABC-PG-cBFAS & 1.041 & 0.151 & 0.550 \\
        ~ & ~ & ABC-PG-cAPF & 0.482 & 0.078 & 0.261 \\
        \hline

        \hline
        ~ & ~ & ABC-PG-cBF & 1.101 & 0.160 & 0.504 \\
        0.1 & 0.98 & ABC-PG-cBFAS & 1.246 & 0.181 & 0.560 \\
        ~ & ~ & ABC-PG-cAPF & 0.707 & 0.110 & 0.274 \\
        \hline

    \end{tabular}
    \caption{RMSEs of the SVM parameter estimates for the three particle Gibbs algorithms, based on 100 simulated datasets with $T=100$, $N=100$, $Z_t \sim SD(1.7, 0.3, 1, 0)$, $\epsilon = 0.001$.}
    \label{Table:simu_2}
\end{table}

\begin{table}[h]
    \centering
    \renewcommand{\arraystretch}{1}
    \begin{tabular}{cc|cccc}
    \hline
    CV & $\phi$ & Algorithm & RMSE $\tau$ & RMSE $\phi$ & RMSE $\sigma_h^2$ \\ 

        \hline
       ~ & ~ & ABC-PG-cBF & 0.681 & 0.088 & 0.614 \\
        10 & 0.9 & ABC-PG-cBFAS & 0.812 & 0.105 & 0.722 \\
        ~ & ~ & ABC-PG-cAPF & 0.174 & 0.027 & 0.158 \\
        \hline
    
        \hline
       ~ & ~ & ABC-PG-cBF & 0.830 & 0.110 & 0.596 \\
        10 & 0.95 & ABC-PG-cBFAS & 0.986 & 0.130 & 0.702 \\
        ~ & ~ & ABC-PG-cAPF & 0.378 & 0.057 & 0.179 \\
        \hline

        \hline
        ~ & ~ & ABC-PG-cBF & 1.007 & 0.137 & 0.566 \\
        10 & 0.98 & ABC-PG-cBFAS & 1.174 & 0.158 & 0.685 \\
        ~ & ~ & ABC-PG-cAPF & 0.615 & 0.090 & 0.253 \\
        \hline

        \hline
        ~ & ~ & ABC-PG-cBF & 0.714 & 0.101 & 0.569 \\
        1 & 0.9 & ABC-PG-cBFAS & 0.889 & 0.126 & 0.671 \\
        ~ & ~ & ABC-PG-cAPF & 0.174 & 0.035 & 0.202 \\
        \hline

        \hline
        ~ & ~ & ABC-PG-cBF & 0.960 & 0.136 & 0.580 \\
        1 & 0.95 & ABC-PG-cBFAS & 1.153 & 0.163 & 0.684 \\
        ~ & ~ & ABC-PG-cAPF & 0.482 & 0.078 & 0.254 \\
        \hline

        \hline
        ~ & ~ & ABC-PG-cBF & 1.145 & 0.164 & 0.579 \\
        1 & 0.98 & ABC-PG-cBFAS & 1.338 & 0.192 & 0.681 \\
        ~ & ~ & ABC-PG-cAPF & 0.708 & 0.111 & 0.290 \\

        \hline
        ~ & ~ & ABC-PG-cBF & 0.739 & 0.108 & 0.587 \\
        0.1 & 0.9 & ABC-PG-cBFAS & 0.896 & 0.131 & 0.674 \\
        ~ & ~ & ABC-PG-cAPF & 0.204 & 0.041 & 0.286 \\
        \hline

        \hline
        ~ & ~ & ABC-PG-cBF & 1.018 & 0.148 & 0.583 \\
        0.1 & 0.95 & ABC-PG-cBFAS & 1.209 & 0.176 & 0.682 \\
        ~ & ~ & ABC-PG-cAPF & 0.553 & 0.091 & 0.306 \\
        \hline

        \hline
        ~ & ~ & ABC-PG-cBF & 1.210 & 0.176 & 0.591 \\
        0.1 & 0.98 & ABC-PG-cBFAS & 1.392 & 0.203 & 0.671 \\
        ~ & ~ & ABC-PG-cAPF & 0.754 & 0.120 & 0.309 \\
        \hline

    \end{tabular}
    \caption{RMSEs of the SVM parameter estimates for the three particle Gibbs algorithms, based on 100 simulated datasets with $T=100$, $N=100$, $Z_t \sim SD(1.5, -0.3, 1, 0)$, $\epsilon = 0.001$.}
    \label{Table:simu_3}
\end{table}

The numerical results show that the proposed ABC-PG-cAPF outperforms the corresponding cBF-based algorithms with parameter RMSEs that are $\sim$30--70\% smaller, depending on the scenario considered. The Gaussian ABC kernel tends to produce very uneven particle weights and the cBF-based algorithms become hindered by weight degeneracy. While ancestor sampling (in cBFAS) is designed to tackle path degeneracy, doing so has limited effectiveness for weight degeneracy and appears to be slightly worse than cBF in these scenarios. In contrast, the weight tempering strategy of the cAPF helps reduce the variability in the weights, which leads to more plausible trajectories being sampled, and in turn improves the accuracy of the parameter estimates.

Some specific patterns in the results can be noted: the RMSEs tend to be higher for a smaller $\sigma_h^2$ (i.e., a larger $\phi$ or a smaller CV) or a more heavy-tailed $Z_t$ (i.e., a smaller $\alpha$). 
Intuitively, volatility is more persistent for a smaller $\sigma_h^2$; however, $r_t$ can still have large fluctuations due to the heavy tails of $Z_t$, so it may be difficult in practice to disentangle the variability coming from $Z_t$ and $h_t$. More heavy-tailed $Z_t$'s make the estimation problem even more challenging. Finally in the ABC setting, a smaller $\epsilon$ is generally preferred for a better ABC approximation of the target distribution; however, the cSMC algorithms may suffer from more severe weight degeneracy when $\epsilon$ is very small. When we decreased $\epsilon$ to 0.0005, the parameter RMSEs were worse for all three algorithms, but ABC-PG-cAPF was the most robust; see Section S2 in the Supplementary Material for details.

\section{Application: the S\&P 500 Index during the Financial Crisis in 2008} \label{section:application}

We now apply the ABC-PG-cAPF algorithm to fit an SVM to time-series data obtained from the Standard \& Poor 500 (S\&P 500) index. Given the time-series of the daily price (i.e., the average of the open and close price on each day), we computed the daily returns for the period January 2008 to March 2009. These are displayed via the black line in the left panel of Figure \ref{figure:SP_return}. The large fluctuations around October 2008 indicate the climax of the well-known global financial crisis. Following the suggestion of \cite{kabavsinskas2009alpha} to take stable distribution parameters in the ranges $\alpha \in (1.65, 1.8)$ and $\beta \in (-0.017, 0.2)$ for financial data, we set $\alpha$ and $\beta$ to be the midpoints of these ranges for our analysis.

\begin{figure}[t]
\begin{center}
\begin{tabular}{cc}
     \includegraphics[scale=0.41]{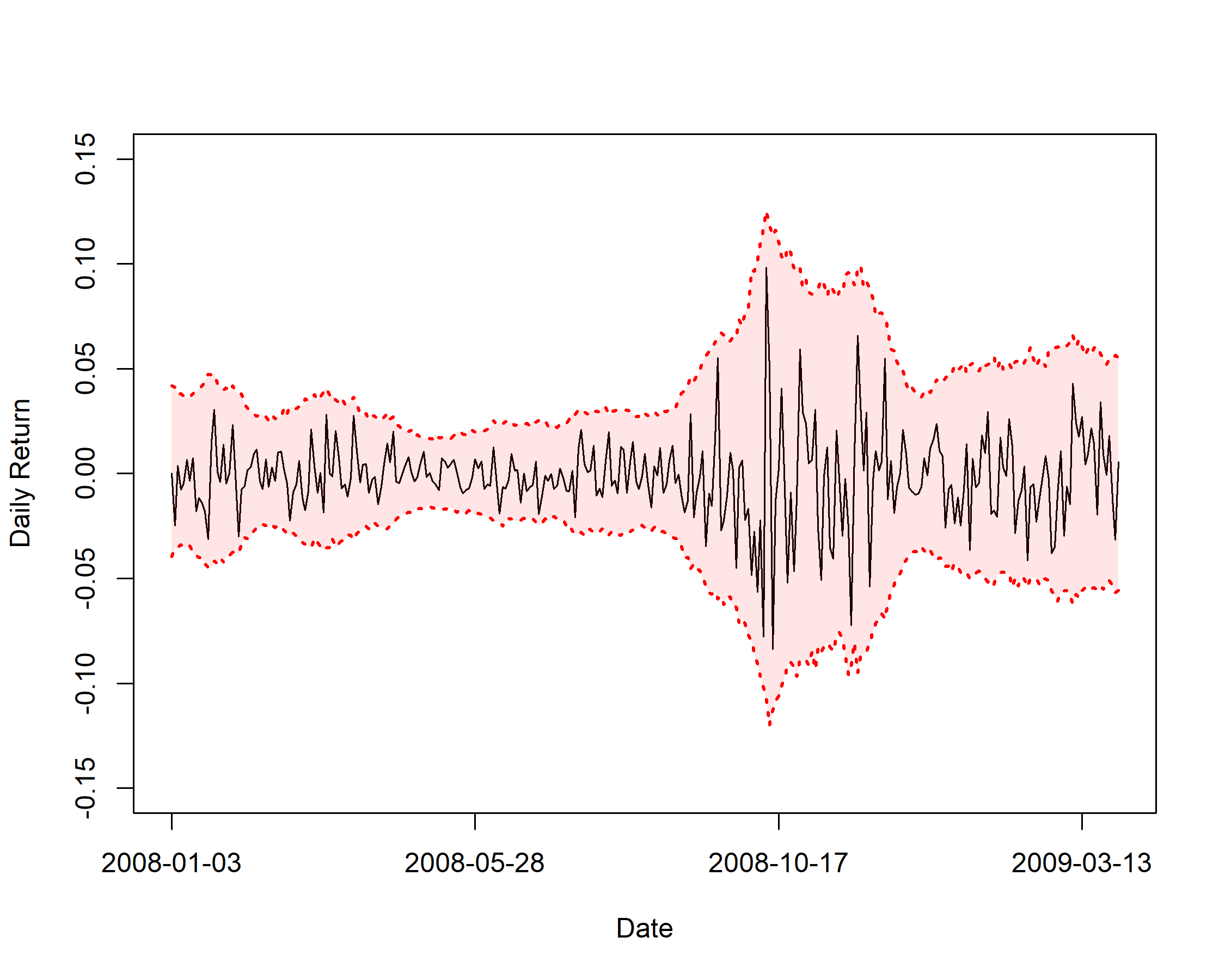} &  
     \includegraphics[scale=0.41]{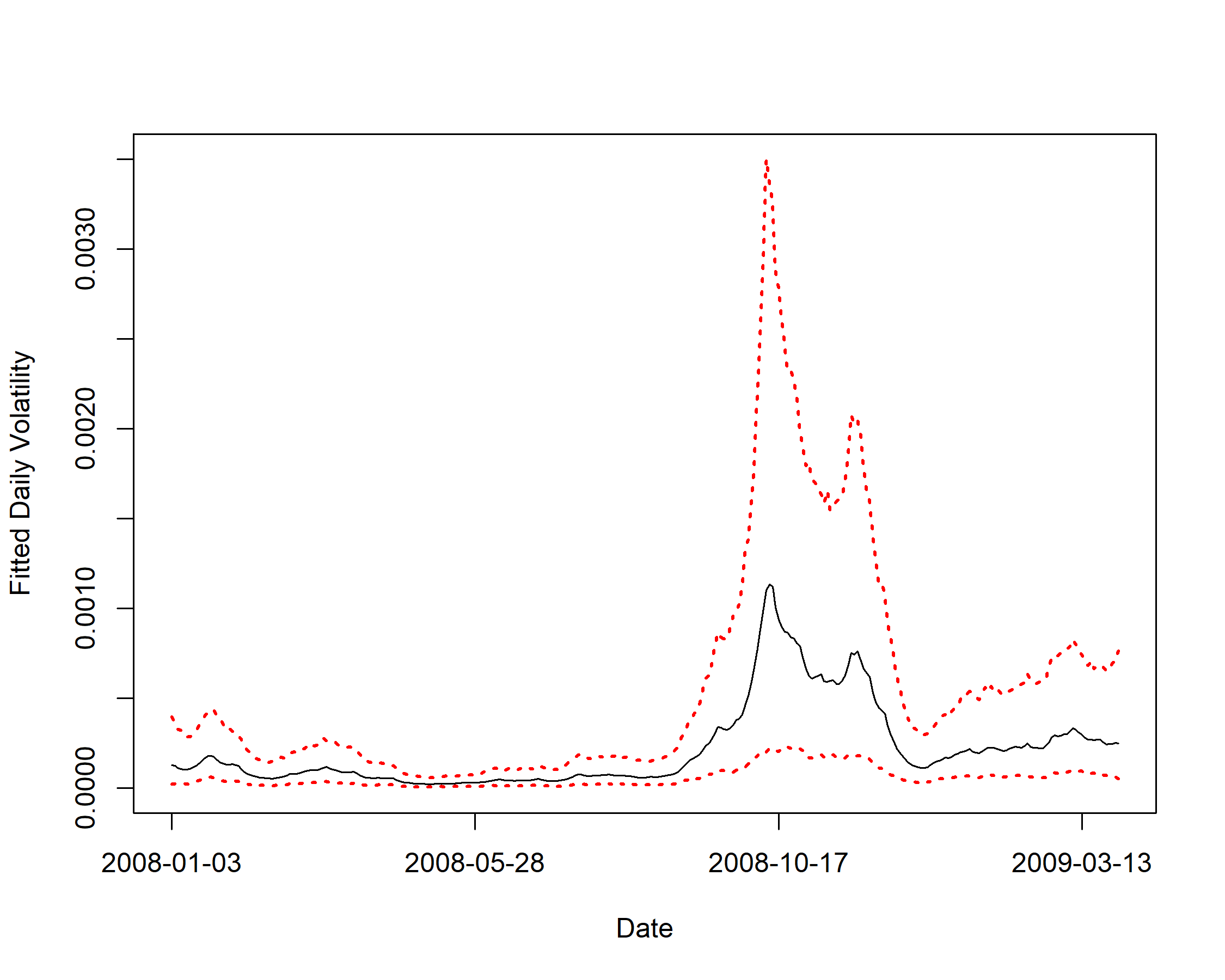}\\
      
\end{tabular}
	
	\caption{The left panel presents the daily returns of the S\&P 500 index from January 2008 to March 2009. The large fluctuations around October 2008 indicate the climax of the global financial crisis. The 2.5\% and 97.5\% quantiles of the sampled returns are shown with red dashed lines and the corresponding 95\% credible interval for the returns is highlighted in red. The right panel presents the fitted daily volatility (black line) with a 95\% credible interval (red dashed lines); the high volatility around the climax is well-captured.}
	\label{figure:SP_return}
\end{center}
\end{figure}

We estimated the SVM parameters based on these data, running 2000 burn-in iterations of ABC-PG-cAPF followed by 5000 sampling iterations, using $N=500$ particles and a Gaussian ABC kernel with $\epsilon = 0.001$. The Bayes' estimates (posterior means) and credible intervals of the parameters are reported in Table \ref{Table:application}. In the right panel of Figure \ref{figure:SP_return}, we plot the fitted daily volatility $\{h_{t}^*\}_{t=1}^T$ along with a central 95\% credible interval based on the posterior samples; the estimated volatility peaks at the climax of the financial crisis. Lastly, the estimated 95\% credible intervals for the returns are also superimposed on the left panel, which were computed via the samples generated from $l(r_{t} \mid h_{t}^*)$ for each draw of $h_{t}^*$ at each time $t = 1, \ldots, T$ and indicate a good fit to the data.

\begin{table}[h]
    \centering
    \renewcommand{\arraystretch}{1}
    \begin{tabular}{ccc}
    \hline
    Parameter & Estimate & 95\% credible interval \\ 

        \hline
       $\tau$ & -0.294 & (-0.639, -0.042) \\
       $\phi$ & 0.967 & (0.930, 0.995) \\
       $\sigma_h^2$ & 0.098 & (0.052, 0.174) \\
        \hline

    \end{tabular}
    \caption{Estimates and credible intervals of $\tau$, $\phi$ and $\sigma_h^2$ based on fitting an SVM to S\&P 500 Index data from January 2008 to March 2009. The numerical results were obtained from the ABC-PG-cAPF algorithm with $N=500$, $Z_t \sim SD(1.725, 0.0915, 1, 0)$, $\epsilon = 0.001$, and 5000 posterior samples.}
    \label{Table:application}
\end{table}

\section{Conclusion and Discussion} \label{section: conclusion}

In this paper, we proposed an ABC-based cAPF embedded within a particle Gibbs sampler for likelihood-free inference of the SVM. Our proposed sampler builds upon a rich SMC and SVM literature, e.g., the idea of using MCMC for parameter estimation in SVMs \citep{jacquier1994bayesian}, particle Gibbs samplers for state space models \citep{andrieu2010particle}, and ABC-based PMCMC for SVM inference with intractable likelihoods \citep{vankov2019filtering}.
Compared to existing particle Gibbs samplers, the proposed ABC-PG-cAPF algorithm produces more accurate parameter estimates with the help of its weight tempering strategy, as demonstrated in the simulation study.

Our sampler can be adapted for broader use with different models and setups. First, if there are any model parameters without closed-form conditional distributions (e.g., $\epsilon_t$ chosen to be a $t$-distribution and thus no conjugacy available for $\sigma_h^2$), or if the hyperparameters $(\alpha, \beta)$ of the stable distribution also need to be estimated, this can be handled by incorporating an additional block of Metropolis updates within the particle Gibbs algorithm. However, the tuning of Metropolis kernels needs to be done carefully \citep{vankov2019filtering}. Second, the computation of the tempered weights for cAPF can be adapted as appropriate to cover the high-density regions of the true importance weights. For example, if the likelihood involves a stable distribution with $0.5 < \alpha < 1$, the Lévy distribution (which corresponds to a stable distribution with $\alpha = 0.5$) can be a better choice of approximating distribution than a Cauchy. However, for a stable distribution with $\alpha < 0.5$ or other kinds of intractable distributions, alternative approximation schemes or numerical methods might be necessary to perform weight tempering efficiently. Finally, other application areas that use SSMs with intractable likelihoods could be considered, e.g., stochastic kinetic models in systems biology \citep{owen2015scalable,lowe2023accelerating}.

\subsection*{Acknowledgments}

This work was partially supported by Discovery Grant RGPIN-2019-04771 from the Natural Sciences and Engineering Research Council of Canada.

\bibliographystyle{apalike}
\bibliography{reference}

\appendix

\newpage
\section*{Supplementary material}

The Supplementary Material contains the proof of Proposition 1 in Section \ref{sec:ABC-cAPF} and the additional experiments described in Section \ref{sec:simulation}.

\renewcommand\thetable{S\arabic{table}} 
\renewcommand{\thesection}{S\arabic{section}}
\renewcommand{\theequation}{S\arabic{equation}}

\section{Proof of Proposition 1}
In the ABC setting, we extend the dimension of the target distribution by introducing a sequence of auxiliary observations $u_{1:T}$. For simplicity, write $x_0 = h_0$ and $x_t = (h_t, u_t)$ for $t = 1,\dots,T$, and we denote the target distribution by $p(x_{0:T}, \theta \mid r_{1:T})$ (dropping $\epsilon$ to simplify notation as it is known and fixed). Then we have the same target distribution as \cite{andrieu2010particle}, where 
they showed that particle Gibbs with the conditional bootstrap filter (cBF), i.e., Algorithm 1 in the main text, admits the target distribution as the invariant distribution under some mild assumptions (i.e., Assumptions 5-7 in \cite{andrieu2010particle}). Here, we shall show that Algorithm 5 in the main text also admits the target distribution as the invariant distribution under the same mild assumptions.

The sweep of particle Gibbs has three steps (see Section 4.5 in \cite{andrieu2010particle}): for cBF, given an input trajectory $x^*_{0:T}$,
\begin{enumerate}
	\item update the parameters, denoted by $\theta^*$
	\item generate new trajectories with ancestor indices conditional on $x^*_{0:T}$ and $\theta^*$ using cBF
	\item draw a trajectory from the union of the generated trajectories and the input trajectory, as the new input trajectory for the next iteration
\end{enumerate}
and we propose to substitute the cBF for Step 2 with the conditional auxiliary particle filter (cAPF). Therefore, to prove the proposition, it is sufficient to show that with the same input trajectory, the cBF and the cAPF target the same distribution.

Write the target distribution of the cBF at each $t$ as $\pi_t(x_{0:t} \mid \theta, r_{1:t})$ for $t = 1,\dots,T$ and $\pi_0 = g_0(x_0 \mid \theta)$ (see Equation (33) in \cite{andrieu2010particle} for the explicit form); for a particle $x_{0:t}$, its particle weight generated by cBF (Algorithm 1) is denoted by $w_t^{(\text{cBF})}(x_{0:t})$, and its particle weight generated by cAPF (Algorithm 4) is denoted by $w_t^{(\text{cAPF})}(x_{0:t})$. Then for the input trajectory $x^*_{0:t}$, it is obvious that we have $w_t^{(\text{cBF})}(x^*_{0:t}) = w_t^{(\text{cAPF})}(x^*_{0:t})$. The input trajectory is set to be the $N$-th particle, i.e., $x^{(N)}_{0:t} = x^*_{0:t}$, for each $t$. Furthermore, for each generated trajectory $x^{(n)}_{0:t}$, $n = 1, \ldots, N-1$, we have 
\begin{align}
E\left\{w_t^{(\text{cAPF})}(x^{(n)}_{0:t})h(x^{(n)}_{0:t})\right\} &= \int w_t^{(\text{cAPF})}(x^{(n)}_{0:t})h(x^{(n)}_{0:t})\eta_t(x^{(n)}_{0:t}) dx^{(n)}_{0:t} \nonumber \\
&= \int \frac{w_t^{(\text{cBF})}(x^{(n)}_{0:t})h(x^{(n)}_{0:t})}{\eta_t(x^{(n)}_{0:t})} \eta_t(x^{(n)}_{0:t}) dx^{(n)}_{0:t} \nonumber 
\\
&= E\left\{w_t^{(\text{cBF})}(x^{(n)}_{0:t})h(x^{(n)}_{0:t})\right\}
\label{eq:proper_weight}
\end{align}
where $h$ is any squared integrable function and $\eta_t(x^{(n)}_{0:t}) = \prod_{s=0}^{t-1} \widetilde{w}_{s}^{(A_{s,t}^{(n)})}$ with $\widetilde{w}_{s}^{(A_{s,t}^{(n)})}$ denoting the tempered weights in cAPF as defined in Equation (5) in the main paper. Given the same input trajectory, Equation \eqref{eq:proper_weight} justifies that the particles produced by cBF and the particles produced by cAPF are properly weighted with respect to the same distribution $\pi_t(x_{0:t} \mid \theta, r_{1:t})$ \citep{liu2001monte}, and thus these two algorithms target the same distribution at each step $t$.

\section{Impact of a smaller $\epsilon$ in the ABC kernel}

To further investigate the influence of $\epsilon$ in the ABC kernel, we set $\epsilon = 0.0005$ while keeping the rest of the settings to be the same as in the Simulation Study of the main paper. The results for the three algorithms, reported via parameter RMSEs analogous to Tables 1--3 of the main paper, are shown in Tables \ref{Table:simu_4}, \ref{Table:simu_5} and \ref{Table:simu_6}. A comparison of these results with $\epsilon = 0.001$ indicates that all three algorithms overall have higher RMSEs with this smaller $\epsilon$, but ABC-PG-cAPF still consistently outperforms the other two algorithms. Moreover, these results suggest that weight degeneracy is more severe for ABC-PG-cBF and ABC-PG-cBFAS 
with the smaller $\epsilon$, as their RMSEs increased significantly more than those of ABC-PG-cAPF.
Therefore, these results provide further numerical evidence that the proposed cAPF-based algorithm is more robust to weight degeneracy.

\begin{table}[htbp!]
	\centering
	\renewcommand{\arraystretch}{1}
	\begin{tabular}{cc|cccc}
		\hline
		CV & $\phi$ & Algorithm & RMSE $\tau$ & RMSE $\phi$ & RMSE $\sigma_h^2$ \\ 
		
		\hline
		~ & ~ & ABC-PG-cBF & 0.740 & 0.098 & 0.598 \\
		10 & 0.9 & ABC-PG-cBFAS & 0.874 & 0.114 & 0.691 \\
		~ & ~ & ABC-PG-cAPF & 0.214 & 0.037 & 0.163 \\
		\hline
		
		\hline
		~ & ~ & ABC-PG-cBF & 0.901 & 0.121 & 0.596  \\
		10 & 0.95 & ABC-PG-cBFAS & 1.050 & 0.139 & 0.690  \\
		~ & ~ & ABC-PG-cAPF & 0.447 & 0.068 & 0.201  \\
		\hline
		
		\hline
		~ & ~ & ABC-PG-cBF & 1.095 & 0.150 & 0.601 \\
		10 & 0.98 & ABC-PG-cBFAS & 1.225 & 0.169 & 0.664 \\
		~ & ~ & ABC-PG-cAPF & 0.691 & 0.103 & 0.298 \\
		\hline
		
		\hline
		~ & ~ & ABC-PG-cBF & 0.824 & 0.116 & 0.582  \\
		1 & 0.9 & ABC-PG-cBFAS & 0.994 & 0.140 & 0.672 \\
		~ & ~ & ABC-PG-cAPF & 0.328 & 0.054 & 0.264 \\
		\hline
		
		\hline
		~ & ~ & ABC-PG-cBF & 1.078 & 0.153 & 0.603 \\
		1 & 0.95 & ABC-PG-cBFAS & 1.262 & 0.179 & 0.693 \\
		~ & ~ & ABC-PG-cAPF & 0.577 & 0.089 & 0.285 \\
		\hline
		
		\hline
		~ & ~ & ABC-PG-cBF & 1.239 & 0.178 & 0.597 \\
		1 & 0.98 & ABC-PG-cBFAS & 1.446 & 0.207 & 0.682 \\
		~ & ~ & ABC-PG-cAPF & 0.818 & 0.125 & 0.330 \\
		
		\hline
		~ & ~ & ABC-PG-cBF & 0.851 & 0.123 & 0.604  \\
		0.1 & 0.9 & ABC-PG-cBFAS & 1.008 & 0.146 & 0.672 \\
		~ & ~ & ABC-PG-cAPF & 0.379 & 0.061 & 0.345 \\
		\hline
		
		\hline
		~ & ~ & ABC-PG-cBF & 1.114 & 0.161 & 0.593  \\
		0.1 & 0.95 & ABC-PG-cBFAS & 1.306 & 0.189 & 0.674  \\
		~ & ~ & ABC-PG-cAPF & 0.682 & 0.105 & 0.356  \\
		\hline
		
		\hline
		~ & ~ & ABC-PG-cBF & 1.295 & 0.188 & 0.593  \\
		0.1 & 0.98 & ABC-PG-cBFAS & 1.511 & 0.219 & 0.678  \\
		~ & ~ & ABC-PG-cAPF & 0.892 & 0.136 & 0.370 \\
		\hline

	\end{tabular}
	\caption{RMSEs of the SVM parameter estimates for the three particle Gibbs algorithms, based on 100 simulated datasets with $T=100$, $N=100$, $Z_t \sim SD(1.75, 0.1, 1, 0)$, $\epsilon = 0.0005$.}
	\label{Table:simu_4}
\end{table}

\begin{table}[htbp!]
	\centering
	\renewcommand{\arraystretch}{1}
	\begin{tabular}{cc|cccc}
		\hline
		CV & $\phi$ & Algorithm & RMSE $\tau$ & RMSE $\phi$ & RMSE $\sigma_h^2$ \\ 
		
		\hline
		~ & ~ & ABC-PG-cBF & 0.765 & 0.101 & 0.620  \\
		10 & 0.9 & ABC-PG-cBFAS & 0.898 & 0.118 & 0.710 \\
		~ & ~ & ABC-PG-cAPF & 0.224 & 0.038 & 0.162 \\
		\hline
		
		\hline
		~ & ~ & ABC-PG-cBF & 0.942 & 0.126 & 0.635 \\
		10 & 0.95 & ABC-PG-cBFAS & 1.056 & 0.140 & 0.700 \\
		~ & ~ & ABC-PG-cAPF & 0.457 & 0.069 & 0.219 \\
		\hline
		
		\hline
		~ & ~ & ABC-PG-cBF & 1.091 & 0.151 & 0.596 \\
		10 & 0.98 & ABC-PG-cBFAS & 1.263 & 0.174 & 0.694 \\
		~ & ~ & ABC-PG-cAPF & 0.695 & 0.103 & 0.292 \\
		\hline
		
		\hline
		~ & ~ & ABC-PG-cBF & 0.875 & 0.123 & 0.623 \\
		1 & 0.9 & ABC-PG-cBFAS & 1.022 & 0.145 & 0.690 \\
		~ & ~ & ABC-PG-cAPF & 0.308 & 0.052 & 0.257 \\
		\hline
		
		\hline
		~ & ~ & ABC-PG-cBF & 1.090 & 0.155 & 0.616 \\
		1 & 0.95 & ABC-PG-cBFAS & 1.288 & 0.182 & 0.705 \\
		~ & ~ & ABC-PG-cAPF & 0.602 & 0.094 & 0.302 \\
		\hline
		
		\hline
		~ & ~ & ABC-PG-cBF & 1.271 & 0.183 & 0.607 \\
		1 & 0.98 & ABC-PG-cBFAS & 1.491 & 0.214 & 0.715 \\
		~ & ~ & ABC-PG-cAPF & 0.840 & 0.128 & 0.351 \\
		
		\hline
		~ & ~ & ABC-PG-cBF & 0.870 & 0.126 & 0.618 \\
		0.1 & 0.9 & ABC-PG-cBFAS & 1.024 & 0.149 & 0.693 \\
		~ & ~ & ABC-PG-cAPF & 0.346 & 0.058 & 0.336 \\
		\hline
		
		\hline
		~ & ~ & ABC-PG-cBF & 1.125 & 0.163 & 0.607 \\
		0.1 & 0.95 & ABC-PG-cBFAS & 1.348 & 0.195 & 0.706 \\
		~ & ~ & ABC-PG-cAPF & 0.682 & 0.106 & 0.357 \\
		\hline
		
		\hline
		~ & ~ & ABC-PG-cBF & 1.325 & 0.192 & 0.619 \\
		0.1 & 0.98 & ABC-PG-cBFAS & 1.555 & 0.225 & 0.715 \\
		~ & ~ & ABC-PG-cAPF & 0.889 & 0.136 & 0.367 \\
		\hline

	\end{tabular}
	\caption{RMSEs of the SVM parameter estimates for the three particle Gibbs algorithms, based on 100 simulated datasets with $T=100$, $N=100$, $Z_t \sim SD(1.7, 0.3, 1, 0)$, $\epsilon = 0.0005$.}
	\label{Table:simu_5}
\end{table}

\begin{table}[htbp!]
	\centering
	\renewcommand{\arraystretch}{1}
	\begin{tabular}{cc|cccc}
		\hline
		CV & $\phi$ & Algorithm & RMSE $\tau$ & RMSE $\phi$ & RMSE $\sigma_h^2$ \\ 
		
		\hline
		~ & ~ & ABC-PG-cBF & 0.946 & 0.122 & 0.824 \\
		10 & 0.9 & ABC-PG-cBFAS & 1.094 & 0.141 & 0.907 \\
		~ & ~ & ABC-PG-cAPF & 0.258 & 0.044 & 0.197 \\
		\hline
		
		\hline
		~ & ~ & ABC-PG-cBF & 1.109 & 0.146 & 0.790 \\
		10 & 0.95 & ABC-PG-cBFAS & 1.279 & 0.168 & 0.896 \\
		~ & ~ & ABC-PG-cAPF & 0.503 & 0.076 & 0.243 \\
		\hline
		
		\hline
		~ & ~ & ABC-PG-cBF & 1.273 & 0.174 & 0.749 \\
		10 & 0.98 & ABC-PG-cBFAS & 1.476 & 0.200 & 0.870 \\
		~ & ~ & ABC-PG-cAPF & 0.730 & 0.109 & 0.318 \\
		\hline
		
		\hline
		~ & ~ & ABC-PG-cBF & 1.042 & 0.147 & 0.772 \\
		1 & 0.9 & ABC-PG-cBFAS & 1.203 & 0.170 & 0.861 \\
		~ & ~ & ABC-PG-cAPF & 0.350 & 0.060 & 0.298 \\
		\hline
		
		\hline
		~ & ~ & ABC-PG-cBF & 1.260 & 0.178 & 0.765 \\
		1 & 0.95 & ABC-PG-cBFAS & 1.504 & 0.213 & 0.890 \\
		~ & ~ & ABC-PG-cAPF & 0.655 & 0.102 & 0.346 \\
		\hline
		
		\hline
		~ & ~ & ABC-PG-cBF & 1.430 & 0.205 & 0.753 \\
		1 & 0.98 & ABC-PG-cBFAS & 1.686 & 0.241 & 0.878 \\
		~ & ~ & ABC-PG-cAPF & 0.880 & 0.135 & 0.382 \\
		
		\hline
		~ & ~ & ABC-PG-cBF & 1.038 & 0.150 & 0.764 \\
		0.1 & 0.9 & ABC-PG-cBFAS & 1.230 & 0.179 & 0.862 \\
		~ & ~ & ABC-PG-cAPF & 0.419 & 0.069 & 0.389 \\
		\hline
		
		\hline
		~ & ~ & ABC-PG-cBF & 1.264 & 0.183 & 0.727 \\
		0.1 & 0.95 & ABC-PG-cBFAS & 1.546 & 0.224 & 0.867 \\
		~ & ~ & ABC-PG-cAPF & 0.712 & 0.112 & 0.389 \\
		\hline
		
		\hline
		~ & ~ & ABC-PG-cBF & 1.483 & 0.215 & 0.751 \\
		0.1 & 0.98 & ABC-PG-cBFAS & 1.733 & 0.252 & 0.860 \\
		~ & ~ & ABC-PG-cAPF & 0.944 & 0.145 & 0.420 \\
		\hline

	\end{tabular}
	\caption{RMSEs of the SVM parameter estimates for the three particle Gibbs algorithms, based on 100 simulated datasets with $T=100$, $N=100$, $Z_t \sim SD(1.5, -0.3, 1, 0)$, $\epsilon = 0.0005$.}
	\label{Table:simu_6}
\end{table}

\end{document}